\definecolor{mygray}{gray}{.9}
\newcolumntype{C}[1]{>{\PreserveBackslash\centering}p{#1}}
\newcolumntype{R}[1]{>{\PreserveBackslash\raggedleft}p{#1}}
\newcolumntype{L}[1]{>{\PreserveBackslash\raggedright}p{#1}}
\newcommand{\AlgoResetCount}{\renewcommand{\@ResetCounterIfNeeded}{\setcounter{AlgoLine}{0}}}
\newcommand{\AlgoNoResetCount}{\renewcommand{\@ResetCounterIfNeeded}{}}
\newcounter{AlgoSavedLineCount}
\theoremstyle{definition}
\newtheorem{theorem}{Theorem}
\def\BibTeX{{\rm B\kern-.05em{\sc i\kern-.025em b}\kern-.08em
    T\kern-.1667em\lower.7ex\hbox{E}\kern-.125emX}}
\begin{document}

\title{Secrecy Outage Probability Fairness for Intelligent Reflecting Surface-Assisted Uplink Channel}

\author{
	{	Xiangrui Cheng\IEEEauthorrefmark{1}, Yiliang Liu\IEEEauthorrefmark{2}, Zhou Su\IEEEauthorrefmark{2}, and Wei Wang\IEEEauthorrefmark{3}}
	\\
	\IEEEauthorblockA{
		\IEEEauthorrefmark{1}School of Software Engineering, Xi'an Jiaotong University, Xi'an 710049, China. \\
		\IEEEauthorrefmark{2}School of Cyber Science and Engineering, Xi'an Jiaotong University, Xi'an 710049, China. \\
		\IEEEauthorrefmark{3}College of Electronic and Information Engineering, Nanjing University of Aeronautics and Astronautics, Nanjing, China.\\
		Email: innsdcc@stu.xjtu.edu.cn, liuyiliang@xjtu.edu.cn, zhousu@ieee.org, wei\_wang@nuaa.edu.cn}
}

\maketitle

\begin{abstract}
This paper investigates physical layer security (PLS) in the intelligent reflecting surface (IRS)-assisted multiple-user uplink channel. Since the instantaneous eavesdropper's channel state information (CSI) is unavailable, the secrecy rate can not be measured. In this case, existing investigations usually focus on the maximization of the minimum (max-min) of signal to interference plus noise power ratio (SINRs) among multiple users, and do not consider secrecy outage probability caused by eavesdroppers. In this paper, we first formulate the minimization of the maximum (min-max) secrecy outage probability among multiple users. The formulated problem is solved by alternately optimizing receiving matrix and phase shift matrix. Simulations demonstrate that the maximum secrecy outage probability is significantly reduced with the proposed algorithm compared to max-min SINR strategies, meaning our scheme has a higher security performance.

\end{abstract}


\section{Introduction}
Recently, intelligent reflecting surfaces (IRS) have attracted much attention due to their excellent potential for the development of 6G networks \cite{Gong2020}. As a special meta surface with plenty of reflecting elements, it has the capability of controlling the scattering, reflection, and refraction characteristics of the radio waves. A large body of research has been motivated by IRS to enlarge coverage and improve spectral/energy-efficiencies for wireless communications \cite{Nguyen2022,Zheng2022}. 

On account of increasing security threats of wireless communication, IRS has also been used to strengthen physical layer security (PLS)  \cite{Liu2022,Sai2022,Dong2021,Asaad2022,Wang2021}. Many IRS-assisted PLS schemes have been conducted to focus on the maximization of secrecy rate by optimizing the phase shift matrix of IRS \cite{Sai2022,Dong2021,Asaad2022}. However, secrecy rate or sum secrecy rate of multiple users maximization methods have the following problems. Firstly, the instantaneous CSIs of eavesdroppers are usually unknown, so the secrecy rate can not be measured. It requires deriving the secrecy outage probability \cite{Liu2022} to replace the secrecy rate as the optimization objective. Secondly, the sum secrecy rate maximization methods \cite{Asaad2022} are not suitable for the multiple user uplink channel because they do not consider the fairness of multiple users. Optimization algorithms being so fully engaged in sum secrecy rate maximization may provide some users with very low secrecy capacities or very large secrecy outage probabilities. Hence, a fairness mechanism should be considered in the optimization design. Due to the unknown eavesdropper's instantaneous CSI, the researchers in \cite{Wang2021, Gong2021MMSINR, Nadeem2020MMSINR } presented the max-min SINR strategy to improve the minimum main channel capacity among multiple users. Nevertheless, the max-min SINR strategy aimed at traditional communication fairness is not designed for minimizing secrecy outage probability, so secrecy outage probability fairness schemes for IRS need more in-depth research.

To solve these problems mentioned above, we formulate a min-max fairness problem in terms of secrecy outage probabilities in the multiple-user uplink channel, where the maximum secrecy outage probability among users can be reduced with the IRS-assisted scheme. To our knowledge, this is the first time the issue has been investigated. The main contributions of this work are summarized as follows.

\begin{itemize}
\item We formulate a multiple-user fairness problem in an IRS-assisted uplink channel to minimize the maximum secrecy outage probability among users. 

\item Since the formulated problem is non-convex and hard to solve, we first transform the original problem into the optimization problems of receiving matrix and phase shift matrix, and solve them via generalized Rayleigh quotient and generalized Dinkelbach's algorithm. Then, we propose an alternating optimization (AO) algorithm to find the global results for the original problem. 
\end{itemize}

The paper is organized as follows. The system model and problem formulation are introduced in Section \ref{sec:sysmodel_and_problem}. Then, the alternating optimization algorithm is proposed in Section \ref{sec:Opt}. Simulation results are provided in Section \ref{sec:simu} and conclusions are presented in Section \ref{sec:conclusions}.

\textsl{Notations:}
Bold uppercase letters, such as $\mathbf{A}$, denote matrices, and bold lowercase letters, such as $\mathbf{a}$, denote column vectors. $\mathbf{A}^{\dagger}$, $\mathbf{A}^{\rm{T}}$, and $\mathbf{A}^{\rm{H}}$ represent the conjugate transformation, transpose, and conjugate transpose of $\mathbf{A}$, respectively. $\mathbf{I}_a$ is an identity matrix with its rank $a$. $\mathcal{CN}(\mu,\sigma^2)$ is a complex normal (Gaussian) distribution with mean $\mu$ and variance $\sigma^2$. $(\mathbf{A})^{-1}$ is the inverse function of $\mathbf{A}$. $|\mathbf{x}|$ is the Euclidean norm of $\mathbf{x}$. $\text{diag}(\mathbf{x})$ is the diagonal matrix of $\mathbf{x}$. $\text{vec}(\mathbf{A})$ is the vectorization of the diagonal elements of $\mathbf{A}$.

\section{System Model and Problem Formulation}  \label{sec:sysmodel_and_problem}
\subsection{System Model}
\begin{figure} 
\centering
\includegraphics[width=0.85\linewidth]{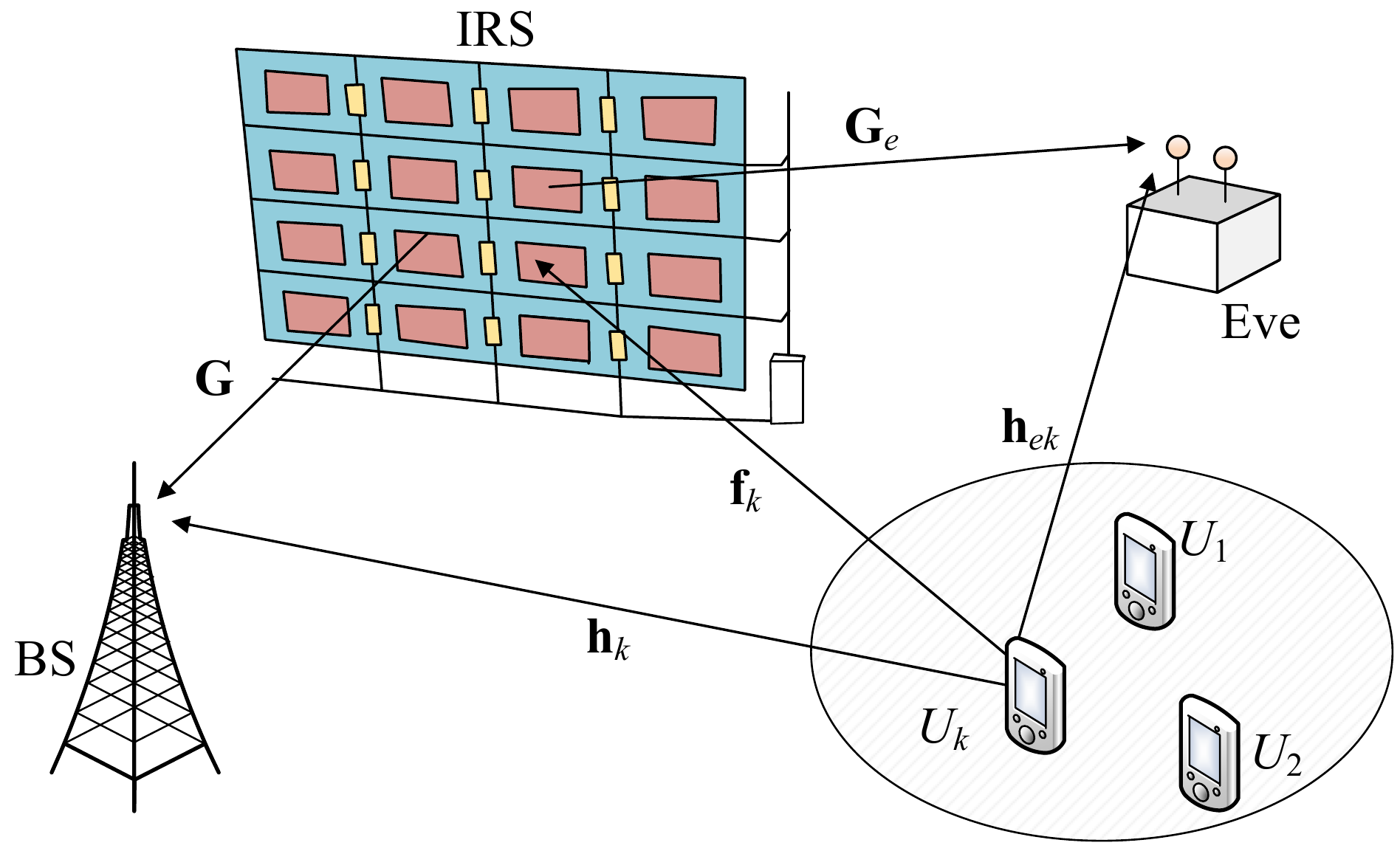}
\caption{The IRS-assisted multiple user uplink communication model with $K$ single-antenna users, an $N_s$-element IRS, an $N_t$-antenna BS, and an $N_e$-antenna eavesdropper.}  \label{fig:system model}
\end{figure}

This paper considers an IRS-assisted multiple-user uplink channel as shown in Fig. \ref{fig:system model}, including a base station (BS) with $N_t$ antennas, an eavesdropper (Eve) with $N_e$ antennas, an IRS equipped with $N_s$ programmable phase shift elements, and $K$ legitimate users with single-antenna defined as $\mathcal{U}=\{U_1,...U_K\}$. Assume that wiretap channels obey spatial-uncorrelated Rayleigh fading, i.e., the channel from $U_k$ to Eve is defined as $\mathbf{h}_{e,k}\sim \mathcal{CN}_{N_e,1}(\mathbf{0},\mathbf{I}_{N_e})$, and the channel from IRS to Eve is defined as $\mathbf{G}_{e}\sim \mathcal{CN}_{N_e,N_s}(\mathbf{0}, \mathbf{I}_{N_e}\otimes \mathbf{I}_{N_s})$. The channel from $U_k$ to BS is defined as $\mathbf{h}_k\in\mathbb{C}^{N_t \times 1}$, the channel from IRS to BS is defined as $\mathbf{G}\in\mathbb{C}^{N_t\times N_s}$, and the channel from $U_k$ to IRS is defined as $\mathbf{f}_{k}\in\mathbb{C}^{N_s\times 1}$. $\mathbf{h}_k$, $\mathbf{G}$, and $\mathbf{f}_{k}$ are assumed to be estimated perfectly. IRS controls programmable phase shift elements via a phase shift matrix, where the phase shift matrix is defined as an $N_s\times N_s$ matrix $\bm{\Phi}$, i.e.,
\begin{flalign}
\bm{\Phi}=\text{diag}[\exp(j\theta_1),...,\exp(j\theta_{n}),..., \exp(j\theta_{N_s})],
\end{flalign}
and $\theta_n\in [0,2\pi)$ is the phase introduced by the $n$-th phase shifter element of IRS.

In the uplink channels, with the phase shift matrix $\bm{\Phi}$, the received signals at BS and Eve can be expressed as
\begin{flalign}\label{Le1}
& \mathbf{y} = (\mathbf{H}+\mathbf{G}\bm{\Phi}\mathbf{F})\mathbf{x}+\mathbf{n} \notag                                                      \\
& = (\mathbf{h}_k+\mathbf{G}\bm{\Phi}\mathbf{f}_k)x_k + \sum_{i=1,i\neq k}^K (\mathbf{h}_i+\mathbf{G}\bm{\Phi}\mathbf{f}_i)x_i+\mathbf{n}, \\
& \mathbf{y}_e=(\mathbf{H}_e+\mathbf{G}_e\bm{\Phi}\mathbf{F})\mathbf{x}+\mathbf{n}_e, \label{Le2}
\end{flalign}
where $x_k$ is the confidential information-bearing signal of $U_k$ with the power constraint $\mathbb{E}(|x_k|^2)=\rho_k$, and $\rho_k$ is the fixed power of $U_k$. $\mathbf{n}_b$ and $\mathbf{n}_e$ is the additive white Gaussian noise (AWGN) obeying $\mathcal{CN}_{N_t,1}(0,\sigma^2_b\mathbf{I}_{N_t})$ and $\mathcal{CN}_{N_e,1}(\mathbf{0},\sigma_e^2\mathbf{I}_{N_e})$, respectively. $\mathbf{H}$ is an extended channel matrix that combines $\mathbf{h}_k, \forall k$ by column, so do $\mathbf{F}$ and $\mathbf{H}_e$. By following the channel model, the secrecy rate of $U_k$ is formulated as
\begin{flalign}
C_{s,k}=(C_{m,k}-C_{w,k})^{+}.
\end{flalign}
$C_{m,k}$ and $C_{w,k}$ are defined as the main channel capacity and wiretap channel capacity,
\begin{flalign}
& C_{m,k}=\log_2(1+\text{SINR}_k), \label{formula:cm}                                                                                                                                                                                            \\
& \text{SINR}_k=\frac{\rho_k| \mathbf{w}_k^{\rm{H}}(\mathbf{h}_k+\mathbf{G}\bm{\Phi}\mathbf{f}_k)|^2}{\mathbf{w}_k^{\rm{H}}(\tilde{\mathbf{K}}_k\mathbf{P}\tilde{\mathbf{K}}_k^{\rm{H}}+\sigma_b^2\mathbf{I}_{N_t})\mathbf{w}_k}, \notag \\
& \mathbf{P}=\text{diag}(\rho_1,...,\rho_{k-1},\rho_{k+1},...,\rho_K), \notag                                                                                                                                                            \\
& C_{w,k}=\log_2\bigg(1+\frac{\rho_k}{\sigma^2_e}|( \mathbf{h}_{e,k}+\mathbf{G}_e\bm{\Phi}\mathbf{f}_k)|^2 \bigg),
\end{flalign}
where $\tilde{\mathbf{K}}_k=[\mathbf{h}_1+\mathbf{G}\bm{\Phi}\mathbf{f}_1,...,\mathbf{h}_{k-1}+\mathbf{G}\bm{\Phi}\mathbf{f}_{k-1},\mathbf{h}_{k+1}+\mathbf{G}\bm{\Phi}\mathbf{f}_{k+1},...,\mathbf{h}_K+\mathbf{G}\bm{\Phi}\mathbf{f}_K]$ is an extended channel matrix that excludes $\mathbf{h}_{k}+\mathbf{G}\bm{\Phi}\mathbf{f}_{k}$. $C_{m,k}$ is achievable at BS when using the receiving vector $\mathbf{w}_{k}^{\rm{H}}$. The pessimistic condition is assumed that Eve has enough computing capacity to eliminate inter-user interference and gather the power gain of desired signals, such as using joint minimum mean square error (MMSE) and successive interference cancellation (SIC) scheme \cite{Tse2005}.
\addtolength{\topmargin}{0.021in}
\subsection{Secrecy Outage Probability}
However, the secrecy rate can not be measured due to the lack of $\mathbf{H}_e$ and $\mathbf{G}_e$. In this case, PLS usually uses secrecy outage probability as the security metric for PLS coding or optimization algorithms. The secrecy outage probability is defined as the probability that the targeted PLS coding rate of $U_k$ encoder, i.e., $R_{k}$, is larger than the secrecy rate $C_{s,k}$. From~\cite[Eq. (38)]{Liu2021}, the secrecy outage probability can be expressed  as
\begin{flalign}\label{uso}
P_{k}(R_k) & =P(C_{s,k} \leq R_k \big| \text{transmission}) \notag \\
& =P(C_{w,k}\geq C_{m,k}- R_k).
\end{flalign}
Obviously, the priority of solving this problem is to find the expression of $P_{k}(R_k)$, which is formulated as follows.
\begin{theorem}[Expression of secrecy outage probability]
The secrecy outage probability of $R_k$ in $U_k$, i.e., $P_{k}(R_k)$, is expressed as
\begin{flalign}\label{esop}
P_{k}(R_k)=\frac{1}{\Gamma(N_e)}\Gamma\bigg(N_e,\frac{\phi_k}{1+|\bm{\Phi}\mathbf{f}_k|^2}\bigg),
\end{flalign}
where $\phi_k=\sigma_e^2(2^{C_{m,k}-R_k}-1)/\rho_k$,  $C_{m,k}$ is defined in Eq. (\ref{formula:cm}), $\Gamma(x)$ is the Gamma function with respect to $x$, and $\Gamma(\epsilon, \eta)$ is the upper incomplete Gamma function defined as follows,
\begin{flalign}\label{ligamma}
\Gamma(\epsilon,\eta)=\int_{\eta}^{\infty}\exp(-z)z^{\epsilon-1}\text{d}z,
\end{flalign}
\end{theorem}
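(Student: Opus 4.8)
The plan is to reduce the probability in Eq.~(\ref{uso}) over the two random wiretap channels $\mathbf{h}_{e,k}$ and $\mathbf{G}_e$ to a single scalar tail probability of a Gamma random variable. The only randomness in $P_k(R_k)$ comes from these two matrices, since $C_{m,k}$, $\bm{\Phi}$, and $\mathbf{f}_k$ are all determined by the perfectly estimated legitimate channels. I would therefore begin by isolating the effective wiretap vector $\mathbf{g}_k = \mathbf{h}_{e,k} + \mathbf{G}_e\bm{\Phi}\mathbf{f}_k \in \mathbb{C}^{N_e}$, which is exactly the quantity appearing inside $C_{w,k}$, and start from the already-simplified outage event $P_k(R_k)=P(C_{w,k}\geq C_{m,k}-R_k)$.

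First I would establish the distribution of $\mathbf{g}_k$. Writing $\mathbf{a}=\bm{\Phi}\mathbf{f}_k$ as a deterministic vector, the $i$-th entry of $\mathbf{G}_e\mathbf{a}$ is $\sum_{j}[\mathbf{G}_e]_{i,j}a_j$, a linear combination of i.i.d.\ $\mathcal{CN}(0,1)$ entries, hence distributed as $\mathcal{CN}(0,|\mathbf{a}|^2)$ and independent across $i$. Because $\mathbf{h}_{e,k}\sim\mathcal{CN}_{N_e,1}(\mathbf{0},\mathbf{I}_{N_e})$ is independent of $\mathbf{G}_e$, the sum satisfies $\mathbf{g}_k\sim\mathcal{CN}_{N_e,1}\big(\mathbf{0},(1+|\bm{\Phi}\mathbf{f}_k|^2)\mathbf{I}_{N_e}\big)$. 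This step---confirming that the entries remain jointly independent and circularly symmetric with the stated common variance $1+|\bm{\Phi}\mathbf{f}_k|^2$---is the one I expect to require the most care, since the entire closed form hinges on that scaling factor being correct.

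Next I would convert the rate event into a magnitude event. Substituting the definition of $C_{w,k}$ into $C_{w,k}\geq C_{m,k}-R_k$ and using the strict monotonicity of $\log_2$ to rearrange, the inequality becomes $|\mathbf{g}_k|^2\geq\sigma_e^2(2^{C_{m,k}-R_k}-1)/\rho_k=\phi_k$, so that $P_k(R_k)=P(|\mathbf{g}_k|^2\geq\phi_k)$.

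Finally I would identify and evaluate the resulting tail. Normalizing by the variance, $Y:=|\mathbf{g}_k|^2/(1+|\bm{\Phi}\mathbf{f}_k|^2)$ is a sum of $N_e$ i.i.d.\ unit-mean exponential random variables, hence $Y\sim\mathrm{Gamma}(N_e,1)$, whose complementary CDF is $\Gamma(N_e,y)/\Gamma(N_e)$ with $\Gamma(\cdot,\cdot)$ the upper incomplete Gamma function of Eq.~(\ref{ligamma}). Evaluating the survival function at $y=\phi_k/(1+|\bm{\Phi}\mathbf{f}_k|^2)$ then yields Eq.~(\ref{esop}), completing the argument.
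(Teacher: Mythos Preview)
Your argument is correct. The key steps---identifying $\mathbf{g}_k=\mathbf{h}_{e,k}+\mathbf{G}_e\bm{\Phi}\mathbf{f}_k$ as a circularly symmetric Gaussian vector with covariance $(1+|\bm{\Phi}\mathbf{f}_k|^2)\mathbf{I}_{N_e}$, rewriting the outage event as a threshold on $|\mathbf{g}_k|^2$, and recognizing the resulting tail as that of a $\mathrm{Gamma}(N_e,1)$ variable---are all sound, and the variance computation you flagged as most delicate is indeed the crux.

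The paper itself does not give a proof: it simply cites \cite[Corollary 2]{Liu2022}. Your self-contained derivation therefore goes further than what the paper provides, and it is exactly the natural argument one would expect underlies the cited corollary. One small remark you might add for completeness: since $\bm{\Phi}$ has unit-modulus diagonal entries, $|\bm{\Phi}\mathbf{f}_k|^2=|\mathbf{f}_k|^2$, which is why the paper later writes the denominator as $1+|\mathbf{f}_k|^2$ in Eqs.~(\ref{c1_def})--(\ref{c2_def}); your formulation and the theorem statement keep $|\bm{\Phi}\mathbf{f}_k|^2$, which is of course equivalent.
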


\begin{proof}
See \cite[Corollary 2]{Liu2022}.
\end{proof}

\subsection{Problem Formulation}
To reduce the secrecy outage probability and guarantee user fairness, the optimization problem is to minimize the maximal secrecy outage probability (the min-max fairness) among the $K$ users, where the phase shift matrix and receiving matrix are jointly optimized, which is formulated as follows.
\begin{flalign}
\text{P1: }  \min_{\bm{\Phi},\mathbf{W}} & \max_{k}P_{k}(R_k),\\
 \text{s.t. } & |\exp(j\theta_n)|^2=1, \quad n=1,...,N_s, \label{P1C1}  \\
 &\mathbf{w}_k^{\rm{H}}\mathbf{w}_k = 1, k=1,...,K, \label{P1C2}  
\end{flalign}
where $P_{k}(R_k)$ can be found in Eq. (\ref{esop}), $\bm{\Phi}$ is the phase shift matrix, and $\mathbf{W}\in \mathbb{C}^{K\times N_t}$ is the receiving matrix that combines $\mathbf{w}_k^{\rm{H}},\forall k$ by row. Eq. (\ref{P1C1}) corresponds to the unit-modulus requirements of the reflection elements at the IRS and Eq. (\ref{P1C2}) is the receiving vector constraint.

\section{Alternating Optimization for Min-Max Secrecy Outage Probability}   \label{sec:Opt}
According to Eq. (\ref{esop}), we use an auxiliary value $z_k$ as 
\begin{flalign}
& z_k = \frac{\phi_k}{(1+|\bm{\Phi}\mathbf{f}_k|^2)}. \label{zk}
\end{flalign}
Since $N_e>0$ and $z_k>0$, we have
\begin{flalign}
\frac{\partial \Gamma(N_e,z_k)}{\partial z_k}=-z_k^{N_e-1}\exp(-z_k),
\end{flalign}
and it is concluded that the optimization objective $P_{k}(R_k)$ of P1 decreases with an increasing $z_k$. Hence, we can transform P1 into P2 as follows,
\begin{flalign}
\text{P2: } & \max_{\bm{\Phi},\mathbf{W}}\min_{k}z_k,      \\
& \text{s.t. }  \text{Eqs.} (\ref{P1C1})\text{ and }(\ref{P1C2}).
\end{flalign}
Obviously, it is difficult to optimize $\bm{\Phi}$ and $\mathbf{W}$ as they are coupled, so we transform P2 into two subproblems as P3 and P4, i.e., the optimization problems of receiving matrix and phase shift matrix as follows,
\begin{flalign}
\text{P3: } & \max_{\mathbf{W}}\min_{k}z_k, \notag\\
& \text{  s.t. }  \text{Eq.} (\ref{P1C2}). \\
\text{P4: } & \max_{\bm{\Phi}}\min_{k}z_k, \notag\\
& \text{  s.t. }  \text{Eq.} (\ref{P1C1}).
\end{flalign}
In Section \ref{sbsec:recv_vector_opt}, we solve P3 using the generalized Rayleigh quotient. P4 is solved by the SDR approach and generalized Dinkelbach’s algorithm in Section \ref{sbsec:phi_opt}. Then, we use an AO algorithm to find the global results for $\bm{\Phi}$ and $\mathbf{W}$ in Section \ref{sbsec:AO}.

\subsection{Receiving Vector Optimization}  \label{sbsec:recv_vector_opt}
For any given phase shift matrix $\bm{\Phi}$, the secrecy outage probability of each user is independent of others. Hence, the optimization problem of each user is independent. In this case, P3 can be equivalently transformed into P5 as follows,
\begin{flalign}
\text{P5: } & \min_{k}\max_{\mathbf{W}}z_k,  \label{min_max_zk} \\
& \text{s.t. } \mathbf{w}_k^{\rm{H}}\mathbf{w}_k = 1, k=1,...,K.
\end{flalign}
To solve P5, $z_k$ in Eq. (\ref{min_max_zk}) can be transformed into Eq. (\ref{zk_trans}) as follows,
\begin{flalign}
z_k & = \frac{\phi_k}{(1+|\bm{\Phi}\mathbf{f}_k|^2)} \notag                                                                                                                                                                                                                                             \\
& =c_{1k}\bigg( \frac{| \mathbf{w}_k^{\rm{H}}(\mathbf{h}_k+\mathbf{G}\bm{\Phi}\mathbf{f}_k)|^2}{\mathbf{w}_k^{\rm{H}}(\tilde{\mathbf{K}}_k\mathbf{P}\tilde{\mathbf{K}}_k^{\rm{H}}+\sigma_b^2\mathbf{I}_{N_t})\mathbf{w}_k} \bigg)+c_{2k} \notag                                                               \\
& =c_{1k}\bigg( \frac{\mathbf{w}_k^{\rm{H}}(\mathbf{h}_k+\mathbf{G}\bm{\Phi}\mathbf{f}_k)(\mathbf{h}_k+\mathbf{G}\bm{\Phi}\mathbf{f}_k)^{\rm{H}}\mathbf{w}_k}{\mathbf{w}_k^{\rm{H}}(\tilde{\mathbf{K}}_k\mathbf{P}\tilde{\mathbf{K}}_k^{\rm{H}}+\sigma_b^2\mathbf{I}_{N_t})\mathbf{w}_k} \bigg)+c_{2k} \notag \\
& =c_{1k}\bigg( \frac{\mathbf{w}_k^{\rm{H}}\mathbf{A}_k\mathbf{w}_k}{\mathbf{w}_k^{\rm{H}}\mathbf{B}_k\mathbf{w}_k} \bigg)+c_{2k}.    \label{zk_trans}
\end{flalign}
where
\begin{flalign}
& c_{1k}=\sigma_e^2/[2^{R_k}(1+|\mathbf{f}_k|^2)],   \label{c1_def}                                                                                    \\
& c_{2k}=\sigma_e^2(1-2^{R_k})/[\rho_k2^{R_k}(1+|\mathbf{f}_k|^2)],  \label{c2_def}                                                                     \\
& \mathbf{A}_k = (\mathbf{h}_k+\mathbf{G}\bm{\Phi}\mathbf{f}_k)(\mathbf{h}_k+\mathbf{G}\bm{\Phi}\mathbf{f}_k)^{\rm{H}}, \label{A_def} \\
& \mathbf{B}_k = (\tilde{\mathbf{K}}_k\mathbf{P}\tilde{\mathbf{K}}_k^{\rm{H}}+\sigma_b^2\mathbf{I}_{N_t}). \label{B_def}
\end{flalign}
According to Eq. (\ref{zk_trans}), we can transform P5 into P6 as follows,
\begin{flalign}
\text{P6: } & \min_{k}\max_{\mathbf{W}} c_{1k}\bigg( \frac{\mathbf{w}_k^{\rm{H}}\mathbf{A}_k\mathbf{w}_k}{\mathbf{w}_k^{\rm{H}}\mathbf{B}_k\mathbf{w}_k} \bigg)+c_{2k}. \\
&\text{s.t. } \mathbf{w}_k^{\rm{H}}\mathbf{w}_k = 1, k=1,...,K. 
\end{flalign}
For each user $k$, the optimal receiving vector, i.e., $\mathbf{w}_{k}^{*}$ can be found by solving the inner maximization problem of P6 as follows,
\begin{flalign}
&\mathbf{w}_{k}^{*} = \arg\max_{\mathbf{w}_k} c_{1k}\bigg( \frac{\mathbf{w}_k^{\rm{H}}\mathbf{A}_k\mathbf{w}_k}{\mathbf{w}_k^{\rm{H}}\mathbf{B}_k\mathbf{w}_k} \bigg)+c_{2k}.   \label{wk:argmaxzk}
\end{flalign}
It is obvious that $\mathbf{A}_k$ and $\mathbf{B}_k$ are Hermitian matrices and $\mathbf{B}_k\succ \mathbf{0}$ according to Eqs (\ref{A_def}) and (\ref{B_def}). Thus, with the property of generalized Rayleigh quotient, the optimal $\mathbf{w}_{k}$ can be derived as follows \cite[Pro 2.1.1]{absil2009optimization},
\begin{flalign}
&\mathbf{w}_{k}^{*} = \text{eigvec}_{\lambda_{max}}(\mathbf{B}_k^{-1}\mathbf{A}_k).  \label{optimal_w_k}
\end{flalign}
where $\text{eigvec}_{\lambda_{max}}(\mathbf{X})$ means the corresponding eigenvector of the largest eigenvalue of matrix $\mathbf{X}$, and $\lambda_{max}$ is the largest eigenvalue of $\mathbf{X}$. For each user, we get the optimal receiving vector by Eq. (\ref{optimal_w_k}), then we have the receiving matrix $\mathbf{W}$ for fixed phase shift matrix $\bm{\Phi}$. It can be said that the subproblem P3 is solved optimally.

\subsection{Phase Shift Matrix Optimization} \label{sbsec:phi_opt}
In the IRS-assisted multiple-user model, the change of phase shift matrix $\bm{\Phi}$ of IRS affects all users. Thus, we cannot consider this problem to be independent for each user as we did with P3. According to the Eq. (\ref{zk_trans}), the P4 can be transformed into P7 as 
\begin{flalign}
\text{P7: } & \max_{\bm{\Phi}}\min_{k}  c_{1k}\bigg(   \frac{| \mathbf{w}_k^{\rm{H}}(\mathbf{h}_k+\mathbf{G}\bm{\Phi}\mathbf{f}_k)|^2}{\mathbf{w}_k^{\rm{H}}(\tilde{\mathbf{K}}_k\mathbf{P}\tilde{\mathbf{K}}_k^{\rm{H}}+\sigma_b^2\mathbf{I}_{N_t})\mathbf{w}_k}  \bigg)+c_{2k}, \\
& \text{  s.t. }  \text{Eq.} (\ref{P1C1}).   \label{P7C1}
\end{flalign}
In order to solve P7 conveniently, we make the following mathematical transformations. The phase shift matrix $\bm{\Phi}$ is a diagonal matrix and $\mathbf{f}_{k}$ is a vector, so $\mathbf{G}\bm{\Phi}\mathbf{f}_k$ can be transformed as follows,
\begin{flalign}
\mathbf{G}\bm{\Phi}\mathbf{f}_k & = \mathbf{G}\text{diag}(\mathbf{f}_k)\text{vec}(\mathbf{\Phi}) =\mathbf{E}_k\mathbf{q},   \label{gphif_trans_eq}
\end{flalign}
where
\begin{flalign}
&\mathbf{q} = \text{vec}(\mathbf{\Phi}),     \quad \mathbf{E}_k = \mathbf{G}\text{diag}(\mathbf{f}_k).
\end{flalign}
The denominator of the optimization objective of P7 can be transformed as follows,
\begin{flalign}  \label{denominator_trans}  
&\mathbf{w}_k^{\rm{H}}(\tilde{\mathbf{K}}_k\mathbf{P}\tilde{\mathbf{K}}_k^{\rm{H}}+\sigma_b^2\mathbf{I}_{N_t})\mathbf{w}_k \notag \\
&=  \mathbf{w}_k^{\rm{H}}\tilde{\mathbf{K}}_k\mathbf{P}\tilde{\mathbf{K}}_k^{\rm{H}}\mathbf{w}_k + t \notag \\
&=  \sum_{i=1,i\neq k}^{K} \rho_i|(\mathbf{h}_i+\mathbf{E}_i\mathbf{q})^{\rm{H}}\mathbf{w}_k|^2 + t \notag \\
& =\sum_{i=1,i\neq k}^{K}  \rho_i(\mathbf{h}_i^{\rm{H}}\mathbf{w}_k\mathbf{w}_k^{\rm{H}}\mathbf{h}_i+\mathbf{h}_i^{\rm{H}}\mathbf{w}_k\mathbf{w}_k^{\rm{H}}\mathbf{E}_i\mathbf{q}\notag  \\
& + \mathbf{q}^{\rm{H}}\mathbf{E}_i^{\rm{H}}\mathbf{w}_k\mathbf{w}_k^{\rm{H}}\mathbf{h}_i+\mathbf{q}^{\rm{H}}\mathbf{E}_i^{\rm{H}}\mathbf{w}_k\mathbf{w}_k^{\rm{H}}\mathbf{E}_i\mathbf{q})+t \notag   \\
& =\sum_{i=1,i\neq k}^{K}   \rho_i(\hat{\mathbf{q}}^{\rm{H}}\mathbf{M}_{i,k}\hat{\mathbf{q}}  + \mathbf{v}_{i,k}) + t,
\end{flalign}
where
\begin{flalign}
&t = \mathbf{w}_k^{\rm{H}}\sigma_b^2\mathbf{I}_{N_t}\mathbf{w}_k,\notag \\
&\mathbf{M}_{i,k}=\left[
\begin{matrix}
\mathbf{E}_i^{\rm{H}}\mathbf{w}_k\mathbf{w}_k^{\rm{H}}\mathbf{E}_i & \mathbf{E}_i^{\rm{H}}\mathbf{w}_k\mathbf{w}_k^{\rm{H}}\mathbf{h}_i \\
\mathbf{h}_i^{\rm{H}}\mathbf{w}_k\mathbf{w}_k^{\rm{H}}\mathbf{E}_i & 0
\end{matrix}
\right],                                                                                          \label{mik}\\
&\mathbf{v}_{i,k}=\mathbf{h}_i^{\rm{H}}\mathbf{w}_k\mathbf{w}_k^{\rm{H}}\mathbf{h}_i, \label{vik} \\
&\hat{\mathbf{q}}^{\rm{H}}  =\left[
\begin{matrix}
\mathbf{q}^{\rm{H}}, & l
\end{matrix}
\right],  \quad \mathbf{E}_i = \mathbf{G}\text{diag}(\mathbf{f}_i).\label{q_hat_def}
\end{flalign}
and $l$ is an auxiliary variable. Similarly, the numerator of the optimization objective of P7 can be transformed as
\begin{flalign}\label{q1}
| \mathbf{w}_k^{\rm{H}}(\mathbf{h}_k+\mathbf{G}\bm{\Phi}\mathbf{f}_k)|^2=\hat{\mathbf{q}}^{\rm{H}}\mathbf{M}_{k,k}\hat{\mathbf{q}} + \mathbf{v}_{k,k},
\end{flalign}
where $\mathbf{M}_{k,k}$ and $\mathbf{v}_{k,k}$ have the same structure with Eqs. (\ref{mik}) and (\ref{vik}) by replacing subscript $i$ with $k$. According to the Eqs. (\ref{denominator_trans}) and (\ref{q1}), P7 can be transformed into P8 as follows.
\begin{flalign}
\text{P8: }  \max_{\hat{\mathbf{q}}}\min_{k} & c_{1k}\bigg(  \frac{\hat{\mathbf{q}}^{\rm{H}}\mathbf{M}_{k,k}\hat{\mathbf{q}} + \mathbf{v}_{k,k}}{ \sum_{i=1,i\neq k}^{K}   \rho_i(\hat{\mathbf{q}}^{\rm{H}}\mathbf{M}_{i,k}\hat{\mathbf{q}}  + \mathbf{v}_{i,k}) + t }  \bigg)+c_{2k}        \label{P8P}\\
&\text{s.t. } |\hat{q_i}|^2=1, \forall i = 1,...,N_s+1, \label{P8C1}
\end{flalign}
where $\hat{q_i}$ is the $i$-th element in $\hat{\mathbf{q}}$. Notice that $\hat{\mathbf{q}}^{\rm{H}}\mathbf{M}_{i,k}\hat{\mathbf{q}}$ is a quadric form. By exploiting the fact that $\text{tr}(\mathbf{AB}) = \text{tr}(\mathbf{BA})$,
$\hat{\mathbf{q}}^{\rm{H}}\mathbf{M}_{i,k}\hat{\mathbf{q}}$ can be transformed as follows,
\begin{flalign}
\hat{\mathbf{q}}^{\rm{H}}\mathbf{M}_{i,k}\hat{\mathbf{q}}
& = \text{tr}(\hat{\mathbf{q}}^{\rm{H}}\mathbf{M}_{i,k}\hat{\mathbf{q}})  \notag \\
& = \text{tr}(\mathbf{M}_{i,k}\hat{\mathbf{q}}\hat{\mathbf{q}}^{\rm{H}})  \notag \\
& = \text{tr}(\mathbf{M}_{i,k}\mathbf{Q}),
\end{flalign}
where $\mathbf{Q}= \hat{\mathbf{q}}\hat{\mathbf{q}}^{\rm{H}}$. Also, we have $\hat{\mathbf{q}}^{\rm{H}}\mathbf{M}_{k,k}\hat{\mathbf{q}}=\text{tr}(\mathbf{M}_{k,k}\mathbf{Q})$. To address constraint (\ref{P8C1}), we introduce $N_s+1$ auxiliary matrices $\mathbf{E}_n,n=1,...,N_s+1$ as follows,
\begin{equation}
[\mathbf{E}_n]_{i,j}=
\begin{cases}
1, & \mbox{$i=j=n$},   \\
0, & \mbox{otherwise},
\end{cases} \label{en_def}
\end{equation}
where $[\mathbf{E}_n]_{i,j}$ is the $(i,j)-$th element of $\mathbf{E}_n$. With $\mathbf{E}_n,\forall n$, the constraint (\ref{P8C1}) can be transformed as $\hat{\mathbf{q}}^{\rm{H}}\mathbf{E}_n\hat{\mathbf{q}}=\text{trace}(\mathbf{E}_n\mathbf{Q})=1, \forall n = 1,...,N_s+1$. Then, P8 can be transformed equivalently as
\begin{flalign}
\text{P9: }   \max_{\mathbf{Q}}\min_{k} &  \frac{ c_{1k}(\text{tr}(\mathbf{M}_{k,k}\mathbf{Q}) + \mathbf{v}_{k,k}) } { \sum_{i!=k}^{K}  \rho_i(\text{tr}(\mathbf{M}_{i,k}\mathbf{Q}) + \mathbf{v}_{i,k})  + t  } + c_{2k},  \\
\text{s.t. } & \text{rank}(\mathbf{Q}) = 1,         \label{P9C1:rankone}  \\
& \text{tr}(\mathbf{E}_n\mathbf{Q})=1, \forall n = 1,...,N_s+1,\label{P9C2:enxq_eq_1} \\
& \mathbf{Q}\succeq0,  \label{P9C3:sem_Q}
\end{flalign}
where the constraints in P9 are convex except the $\text{rank}(\mathbf{Q}) = 1$ is a non-convex constraint. Thus, we apply semi-definite relaxation to relax this constraint to get a convex problem. P9 without the rank-one constraint can be regarded as an optimization problem that maximizes the ratios in which both the numerator and the denominator are affine, and generalized Dinkelbach's algorithm is suitable to solve it. Generalized Dinkelbach's algorithm belongs to the class of parametric algorithms, which can decompose P9 into a sequence of convex problems and obtain the global solution at a linear convergence rate\cite{FracProgTheory_Zappone_2015}.

In order to execute the generalized Dinkelbach's algorithm conveniently,  we denote the numerator and denominator of the optimization objective in P9 as $N_k(\mathbf{Q})$ and $D_k(\mathbf{Q})$,
\begin{flalign}
&N_k(\mathbf{Q}) = c_{1k}(\text{tr}(\mathbf{M}_{k,k}\mathbf{Q}) + \mathbf{v}_{k,k}) + c_{2k}D_k(\mathbf{Q}),  \label{eq:nkq}\\
&D_k(\mathbf{Q})= \sum_{i!=k}^{K}  \rho_i(\text{tr}(\mathbf{M}_{i,k}\mathbf{Q}) + \mathbf{v}_{i,k})  + t.
\end{flalign}
The constraints of P9 are denoted by a set $\mathcal{S}$ except for the rank-one constraint in Eq. (\ref{P9C1:rankone}) as follows,
\begin{flalign}
\mathcal{S} = \{ \mathbf{Q} \text{ } | \text{ Eqs. (\ref{P9C2:enxq_eq_1}), (\ref{P9C3:sem_Q})\}}.
\end{flalign}

\begin{algorithm}[t]  
	\DontPrintSemicolon
	\KwInput{$\mathbf{G, H, F, P}, \sigma_b, N_t, N_s, K,R_k$}
	\KwOutput{$\mathbf{Q^{*}}$}
	Initialize $\tau > 0, \lambda = 0$; \;
	\Do{$F > \tau $ }
	{$ \mathbf{Q}^{*} = \arg\max\limits_{\mathbf{Q} \in \mathcal{S}} \left\{ \min\limits_{1 \leq k \leq K} \{N_k(\mathbf{Q}) - \lambda D_k(\mathbf{Q})\}  \right\}$;  \;
		$F = \min\limits_{1 \leq k \leq K} \{N_k(\mathbf{Q}^{*}) - \lambda D_k(\mathbf{Q}^{*})\} $; \;
		$\lambda = \min\limits_{1 \leq k \leq K} \frac{N_k(\mathbf{Q}^{*})}{D_k(\mathbf{Q}^{*})}$; \;
		}
	Get result $\mathbf{Q^{*}} $; \;
	\textbf{Procedure End}
	\caption{Generalized Dinkelbach's Method for P9 Except Rank-One Constraint.}
	\label{alg:dink}
\end{algorithm}
The procedure of solving P9 by generalized Dinkelbach's algorithm is described in Algorithm \ref{alg:dink}, where $\tau$ is an arbitrarily small value. In detail, the step 3 in Algorithm \ref{alg:dink} can be transformed by introducing an auxiliary value $u$,
\begin{flalign}
\text{P10: }  & \max\limits_{\mathbf{Q}}  u,  \\
\text{s.t. }  & \mathbf{Q} \in \mathcal{S}, \\
& u \leq N_k(\mathbf{Q}) - \lambda_nD_k(\mathbf{Q}), \forall k = 1,...,K.
\end{flalign}
P10 is a convex problem, which can be solved by interior-point methods or MATLAB CVX tool.

By generalized Dinkelbach's algorithm, we can obtain $\mathbf{Q^{*}}$. If $\text{rank}(\mathbf{Q^{*}}) = 1$ , the optimal $\hat{\mathbf{q}}$, i.e., $\hat{\mathbf{q}}^{*}$ can be obtained as follows,
\begin{flalign}
\hat{\mathbf{q}}^{*} = \text{eigvec}_{\lambda_{max}}(\mathbf{Q^{*}}).
\end{flalign}
However, when $\text{rank}(\mathbf{Q^{*}})$ is not equal to one, the near-optimal $\hat{\mathbf{q}}$ can be obtained by maximum eigenvalue methods or the Gaussian randomization method\cite{WuIRS2019}. Here, we choose the Gaussian randomization method to obtain near-optimal $\hat{\mathbf{q}}$.

\subsection{Alternating Optimization}    \label{sbsec:AO}
In previous sections, we use the generalized Rayleigh quotient and generalized Dinkelbach's algorithm to solve P3 and P4, respectively. Here, we design an AO algorithm for the global results of the receiving matrix and phase shift matrix, as shown in Algorithm \ref{alg:ao}. 

\begin{algorithm}[h]
\DontPrintSemicolon
\KwInput{$\mathbf{G, H, F, P, R}, \sigma_b, \sigma_e, N_t, N_s, N_e,R_k$}
\KwOutput{$\mathbf{\Phi}_o^*,\mathbf{W}_o^*$}
Initialize iter = 1 and the iterating limit is $\text{iter}_\text{max}$; \;
Initialize receiving matrix $\mathbf{Q}^*_0$ randomly; \;
\Do{$\text{iter} < \text{iter}_\text{max} \text{\&\&}  \{\text{P}_{\text{out}}(\text{iter}) -\text{P}_{\text{out}}(\text{iter}-1) \leq \xi \} $}
{Solve P3 via generalized Rayleigh quotient to obtain $\mathbf{W}^*_{\text{iter}}$; \;
Obtain $\mathbf{Q}_\text{iter}^*$ with fixed $\mathbf{W}^*_{\text{iter}}$ by generalized Dinkelbach's algorithm; \;
Calculate $P_{k}(R_k),\forall k$ via Eq. (\ref{esop}) and takes the maximum value for each user, i.e., $P_{\text{out}}(\text{iter})$; \;
Update iter = iter + 1;}
$\mathbf{Q}^*_\text{o} = \mathbf{Q}^*_{\text{iter}}, \mathbf{W}^*_\text{o} = \mathbf{W}^*_{\text{iter}}$;\;
Get $\hat{\mathbf{q}}^*_\text{o}$ via Gaussian randomization method of $\mathbf{Q}^*_\text{o}$;\\
\textbf{Procedure End}
\caption{AO Algorithm for Receiving Matrix and Phase Shift Matrix.}
\label{alg:ao}
\end{algorithm} 

In Algorithm \ref{alg:ao}, $\mathbf{Q}_\text{iter}^*$ and $\mathbf{W}^*_{\text{iter}}$ are local results with in the $\text{iter}$-th iteration, and $P_{\text{out}}(\text{iter})$ is the min-max secrecy outage probability in the $\text{iter}$-th iteration. We set an arbitrarily small value $\xi$, where the iteration process continues until $P_{\text{out}}(\text{iter}) -P_{\text{out}}(\text{iter}-1) \leq \xi  $. With the iteration, the receiving matrix and phase shift matrix are alternately optimized, and $P_{\text{out}}(\text{iter})$ gradually converges to constants, as do $\mathbf{Q}_\text{iter}^*$ and $\mathbf{W}^*_{\text{iter}}$. Then, the iteration terminated. To avoid the endless loop, we set $\text{iter}_\text{max}$ as the maximum number of allowed loops.

\subsection{Computational Complexity Analysis}
The computational complexity of Algorithm \ref{alg:ao} is related to the number of iterations and the complexity of solving two subproblems in steps 4, 5, 6, 10, respectively. According to Eq. (\ref{optimal_w_k}), we know that step 4 requires $K(3N_t^3 + KN_t^2)$ iterations. Step 5 exploits Algorithm \ref{alg:dink} to obtain the phase shift matrix with a linear convergence rate. The complexity of each iteration in Algorithm \ref{alg:dink} is on the order of $O[K\ln(1/\epsilon)(KN_s)^{4.5}]$, where $\epsilon$ is the accuracy requirement of the interior-point method. According to Eq. (\ref{esop}), the computational complexity of step 6  is on the order of $O(K^2N_s^3)$. The computational complexity of step 10 is on the order of $O[(N_s)^3 + I_GKN_s^3]$, where $I_G$ denotes the number of iterations of the Gaussian randomization method. In total, the computational complexity of Algorithm \ref{alg:ao} is on the order of $O(I_A(KN_t^3 + K^2N_t^2 + I_DK\ln(1/\epsilon)(KN_s)^{4.5}) + I_GKN_s^3)$, where $I_A \text{ and } I_D$ denote the numbers of iterations of AO algorithm and generalized Dinkelbach's algorithm, respectively.

\section{SIMULATIONS}   \label{sec:simu}

In this section, simulation results are provided to test the proposed optimization scheme. The performance of the proposed scheme is tested in different $N_t, N_s, \text{and } N_e$, and the max-min SINR scheme \cite{Wang2021,Gong2021MMSINR,Nadeem2020MMSINR} as a comparison for our proposed scheme. The number of users $K$ is set to 4 to simulate a multi-user scenario. The power $\rho_k$ is set to be the same for all users, so is the PLS coding rate $R_k$.

\begin{figure}[htbp]
	\centering
	\includegraphics[width=1\linewidth]{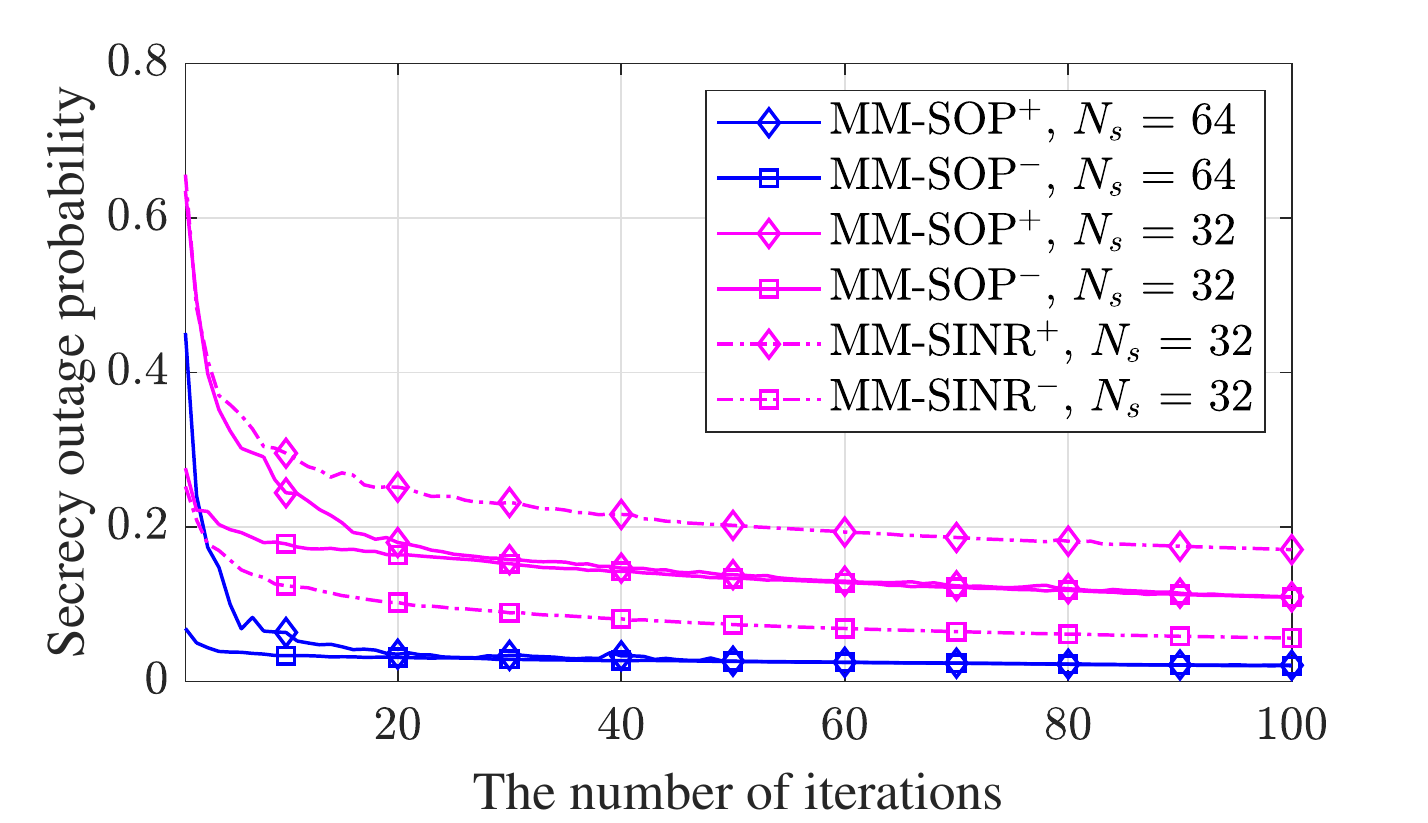}
	\caption{Convergence test of AO algorithm, where $N_t = 10, N_e = 2,$ SNR = 1 dB, and PLS coding rate is 2 bit/s/Hz.}\label{fig:test_iter_conv}
\end{figure}

We first perform a convergence test on the proposed AO algorithm. As shown in Fig. \ref{fig:test_iter_conv}, we conducted convergence tests with different $N_s$. Where $\text{MM-SOP}^+$ and $\text{MM-SOP}^-$ mean the maximum and minimum value in multiple users respectively, with the proposed scheme that minimizes the maximum secrecy outage probability (MM-SOP). MM-SINR means the scheme that maximizes the minimum SINR \cite{Wang2021,Gong2021MMSINR,Nadeem2020MMSINR}. In several independent simulations, our optimization objective, the maximum secrecy outage probability in multi-users, decreases rapidly in the first few iterations and converges to a constant in later iterations. The curve shows a good performance of the proposed AO algorithm. Compared with the MM-SOP scheme, the secrecy outage probability gap between multi-users using the MM-SINR scheme is larger. In addition, the convergence rate is slower in $N_s$ = 32 compared to that of $N_s$ = 64. A larger number of IRS elements is more conducive to the system with the proposed scheme.

\begin{figure*}[htbp]
	\begin{subfigure}[t]{.32\textwidth}
		\centering
		\includegraphics[width=1\textwidth]{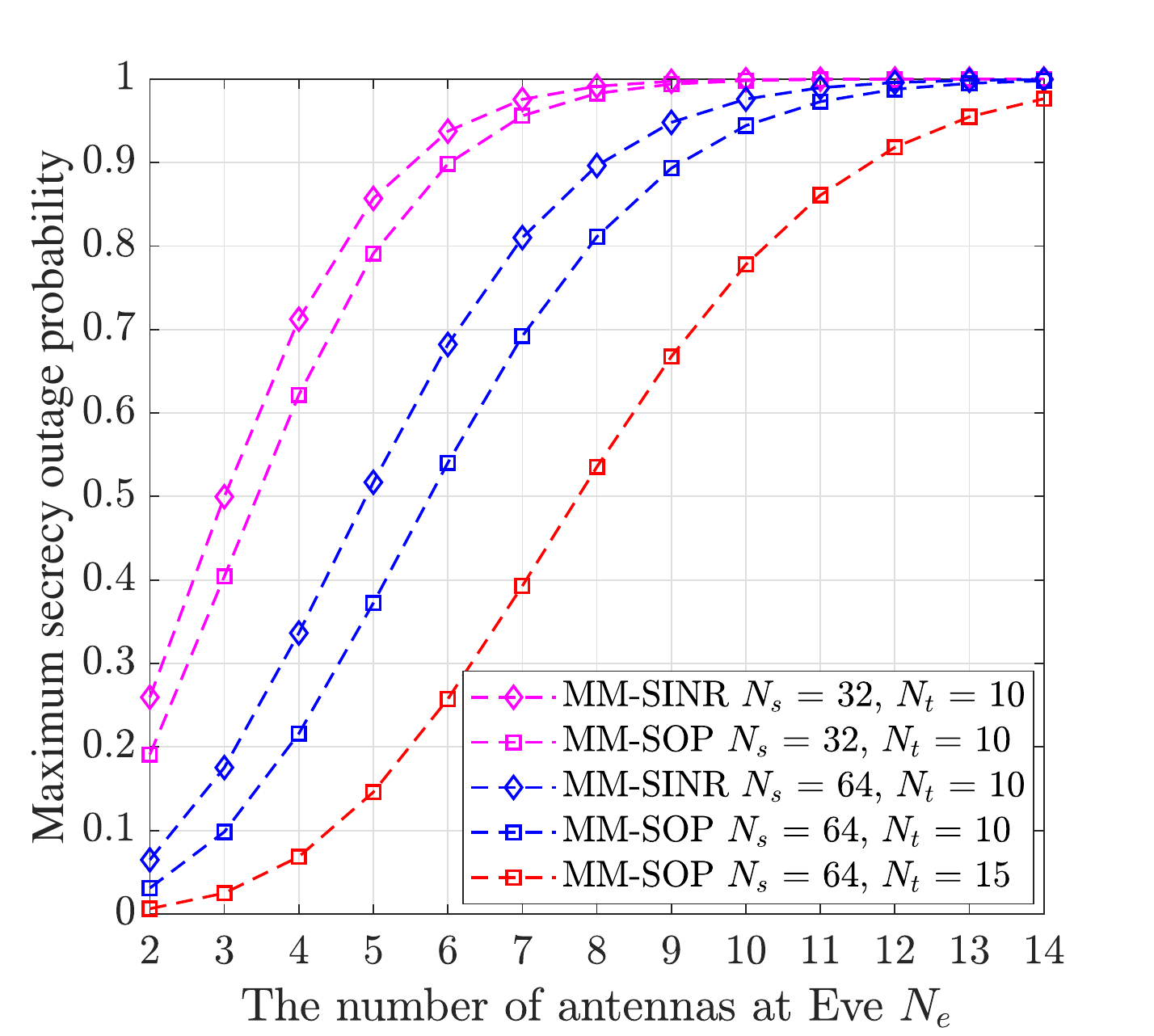}
		\caption{Effect of $N_e$, where SNR = 1 dB and PLS coding rate is 2 bit/s/Hz.}\label{fig:testne}
	\end{subfigure}\hfill
	\begin{subfigure}[t]{.32\textwidth}
		\centering
		\includegraphics[width=1\textwidth]{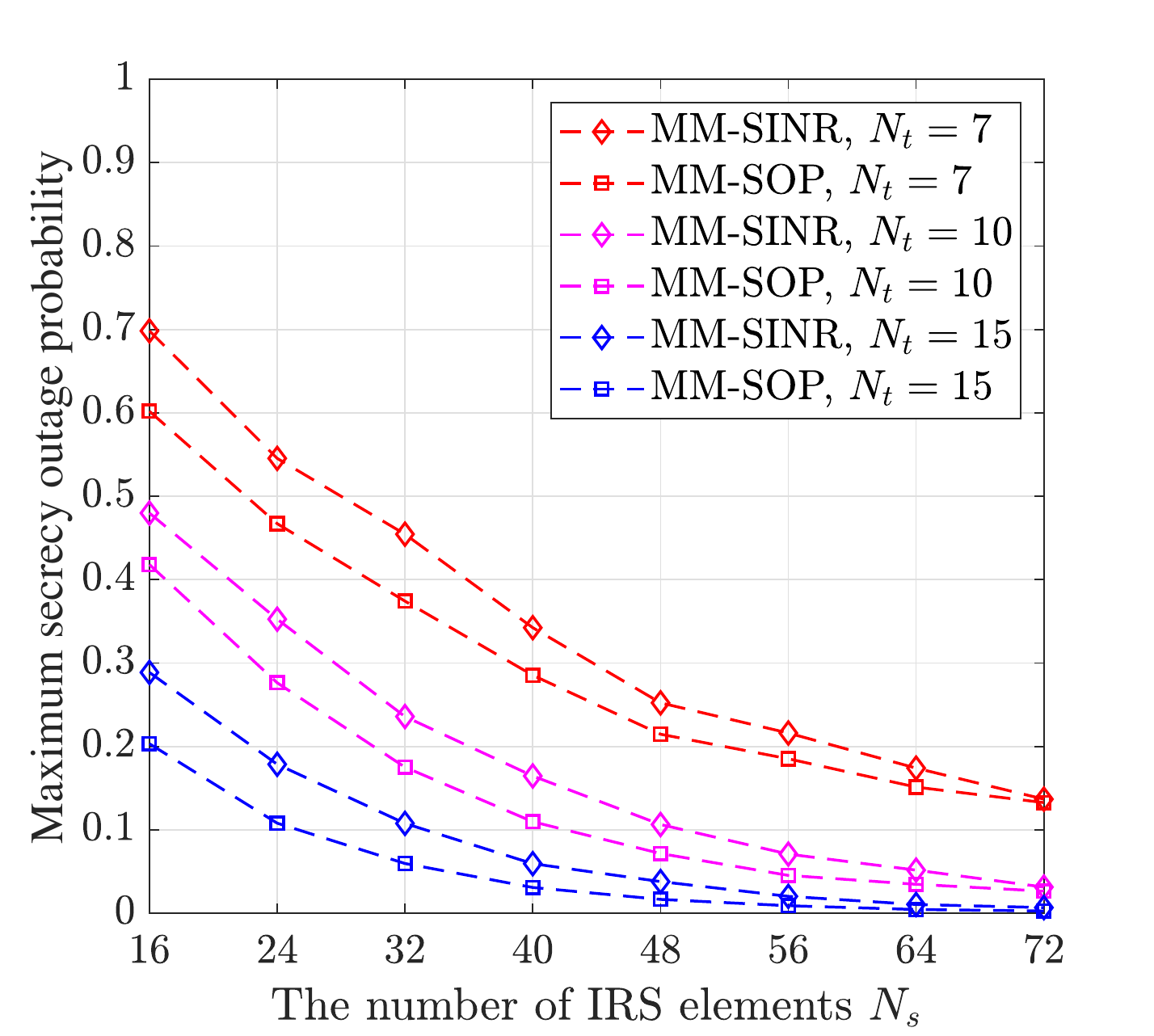}
		\caption{Effect of $N_s$, where $N_e = 2$, SNR = 1 dB, and PLS coding rate is 2 bit/s/Hz.}\label{fig:test_ns}
	\end{subfigure}
	\begin{subfigure}[t]{.32\textwidth}
		\centering
		\includegraphics[width=1\textwidth]{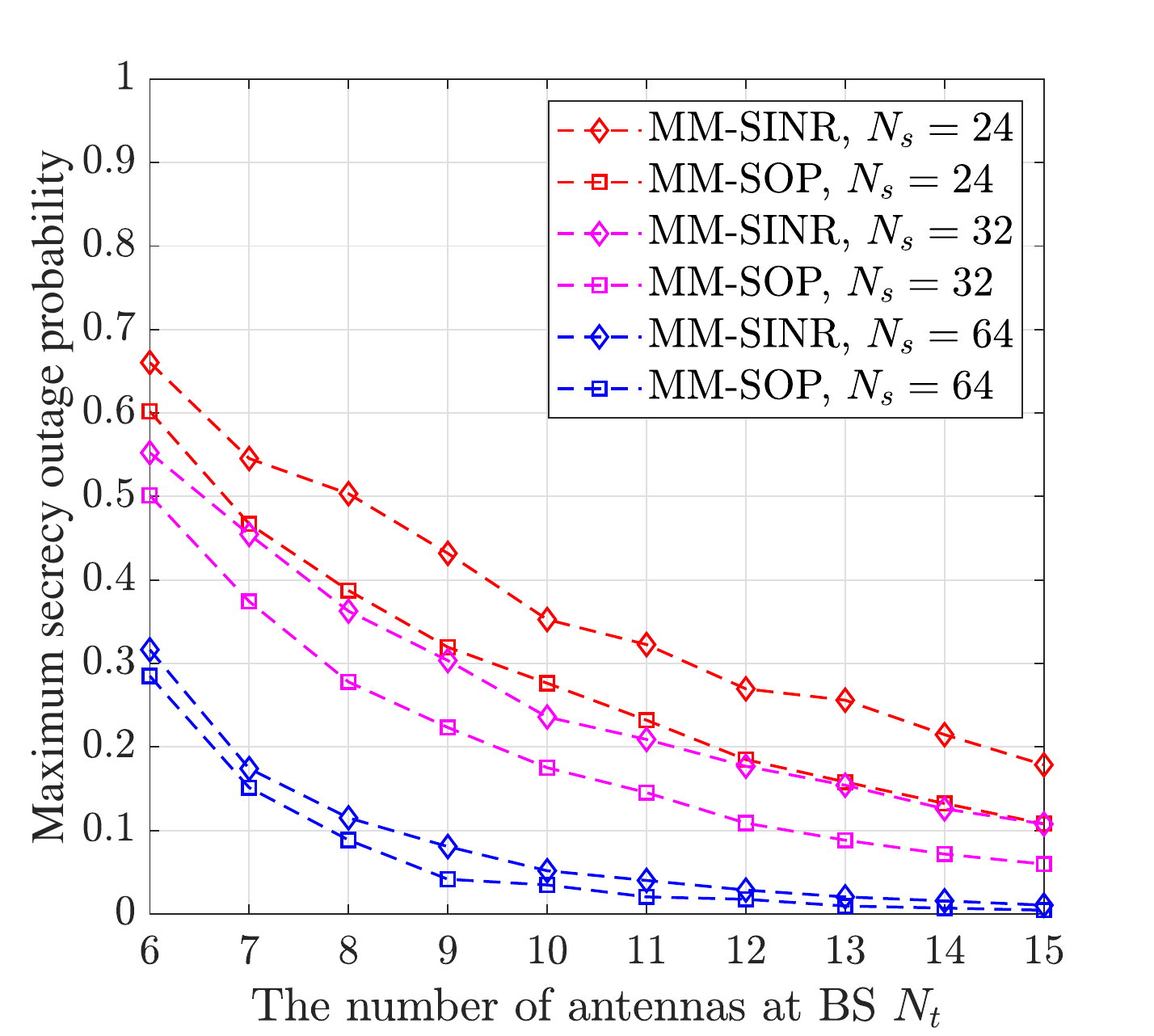}
		\caption{Effect of  $N_t$, where $N_e = 2$, SNR = 1 dB, and PLS coding rate is 2 bit/s/Hz.}\label{fig:test_nt}
	\end{subfigure}
	\caption{Maximum secrecy outage probability among users.}
\end{figure*}

Fig. \ref{fig:testne} shows the impact of $N_e$ on secrecy outage probability. We can find that the secrecy outage probability increases with an increasing $N_e$. In addition, Fig. \ref{fig:testne} also shows that more antennas of BS or more IRS reflecting elements can decrease the maximum secrecy outage probability of users, both for MM-SOP and MM-SINR. Fig. \ref{fig:test_ns} shows in more detail the impact of $N_s$ on the maximum secrecy outage probability. With the increase of $N_s$, the maximum secrecy outage probability gradually decreases. Fortunately, the reflecting elements of IRS are low-cost and can be equipped in large quantities. The impact of $N_t$ is shown in Fig. \ref{fig:test_nt}. It can be observed that more BS antennas can reduce the secrecy outage probability and the system is more sensitive to $N_t$ than to $N_s$. Fig. \ref{fig:testne}, \ref{fig:test_ns}, and \ref{fig:test_nt} also show that the performance of the MM-SOP scheme is superior than that of the MM-SINR scheme.

\section{CONCLUSIONS}   \label{sec:conclusions}
This paper focuses on the PLS without the instantaneous CSI of eavesdroppers in IRS-assisted multiple-user uplink channels. We have proposed an AO algorithm to minimize the maximum outage probability with fairness among multiple users. In detail, receiving vector of each user and phase shift matrix are optimized alternately by generalized Rayleigh quotient and generalized Dinkelbach's algorithm, respectively. Simulation results have shown the excellent convergence performance of the AO algorithm. Moreover, the proposed scheme significantly reduces the maximum secrecy outage probability compared to the max-min SINR method.

\bibliographystyle{IEEEtran}
\bibliography{references}
\end{document}